\theoremstyle{definition}
\newtheorem{theorem}{Theorem}[section]
\newtheorem{corollary}[theorem]{Corollary}
\newtheorem{proposition}[theorem]{Proposition}
\newtheorem{remark}[theorem]{Remark}
\def\@seccntformat#1{\@ifundefined{#1@cntformat}%
	{\csname the#1\endcsname\quad}
	{\csname #1@cntformat\endcsname}
}
\newif\ifShowComments
\def\strutdepth{\dp\strutbox}
\def\druk#1{\strut\vadjust{\kern-\strutdepth
        {\vtop to \strutdepth{%
                \baselineskip\strutdepth\vss
                        \llap{\hbox{#1}\quad}\null}}}}
\title{\bf
%
%
An unbiased estimator of a novel extended Gini index for gamma distributed populations
}
\author{
\text{Roberto Vila}$^{1}$\thanks{Corresponding author: Roberto Vila, email: {rovig161@gmail.com}
}
\,\,\,and
\text{Helton Saulo}$^{1,2}$
\\
{\small $^{1}$ Department of Statistics, University of Brasilia, Brasilia, Brazil}\\
{\small $^{2}$ Department of Economics, Federal University of Pelotas, Pelotas, Brazil}\\
}
\begin{document}
	\maketitle 	
	\begin{abstract}
In this paper, we introduce a novel flexible Gini index, referred to as the extended Gini index, which is defined through ordered differences between the $j$th and $k$th order statistics within subsamples of size $m$, for indices satisfying $1 \leqslant j \leqslant k \leqslant m$.  We derive a closed-form expression for the expectation of the corresponding estimator under the gamma distribution and prove its unbiasedness, thereby extending prior findings by \cite{Deltas2003}, \cite{Baydil2025}, and \cite{Vila2025}. A Monte Carlo simulation illustrates the estimator's finite-sample unbiasedness. A real
data set on gross domestic product (GDP) per capita is analyzed to illustrate the proposed measure.

	\end{abstract}
	\smallskip
	\noindent
	{\small {\bfseries Keywords.} {Gamma distribution, extended Gini index, extended Gini index estimator, $m$th Gini index, unbiased estimator.}}
	\\
	{\small{\bfseries Mathematics Subject Classification (2010).} {MSC 60E05 $\cdot$ MSC 62Exx $\cdot$ MSC 62Fxx.}}
%

\section{Introduction}

Measuring income inequality is a central task in economics and public policy. The classical Gini index is \citep{Gini1936} a widely used inequality measure.  Nevertheless, this index suffers from a key limitation, i.e., distinct income distributions with different concentration patterns may yield the same Gini index value. Recently, \cite{Gavilan-Ruiz2024} addressed this problem by deriving an index computed as the expected difference between the maximum and minimum values in random samples of size $m$. This novel $m$th Gini index retains the intuitive appeal of the classical Gini measure while enabling finer discrimination between distributions that share the same Gini value but differ in their internal inequality structure.

In this paper, we propose a novel generalization, called the extended Gini index, which is defined using ordered differences between the $j$th and $k$th values within subsets of size $m$, satisfying $1 \leqslant j \leqslant k \leqslant m$. This formulation allows for a more flexible treatment of the data by systematically discarding the most extreme values. We prove that the associated extended Gini index estimator is unbiased when the underlying population follows a gamma distribution. Our result extends and unifies previous findings: \cite{Deltas2003} established that the sample Gini coefficient is unbiased under an exponential distribution; \cite{Baydil2025} showed that the sample Gini coefficient remains unbiased under gamma-distributed populations; and \cite{Vila2025} proved that the $m$th Gini index estimator
is unbiased in the gamma case.

The motivation for introducing the extended Gini index stems from the need to better capture inequality patterns in specific distributional scenarios. Consider, for example, income distributions where the majority of individuals are concentrated in lower (or higher) income levels, but a small fraction of the population belongs to the opposite extreme--very rich or very poor. In such cases, the $m$th Gini index \citep{Gavilan-Ruiz2024} may yield inflated or distorted inequality measures due to the influence of these few extreme values. Our proposed extended Gini index addresses this issue by allowing the analyst to disregard the outermost values in each subset through the choice of indices $1 \leqslant j \leqslant k \leqslant m$. This flexibility leads to a more realistic quantification of inequality, especially when the extremes are not representative of the broader population. Therefore, by focusing on central order statistics, the extended index can report significantly lower inequality levels in cases where most individuals have similar incomes, and only a few lie at the extremes.

The rest of this paper unfolds as follows. In Section \ref{sec:02}, we define the proposed extended Gini index and obtain an explicit expression under the assumption of a gamma-distributed population. In Section \ref{sec:03}, we establish the unbiasedness of the sample extended Gini index estimator and provide a closed-form expression for its expectation. In Section \ref{sec:04}, we present a Monte Carlo simulation study that illustrates and supports our theoretical results. In Section \ref{sec:05}, we apply the proposed estimator to a data set on  on gross domestic product (GDP) per capita for demonstrating the practical utility of the measure introduced here. Finally, in Section \ref{sec:06}, we offer concluding remarks.

\section{Extended Gini index}\label{sec:02}

Let $X_1, X_2, \ldots, X_m$ be independent and identically distributed (iid) random variables with the same
distribution as a non-negative random $X$ with mean $\mu=\mathbb{E}(X)>0$ and let  $X_{1:m}=\min\{X_1,\ldots,X_m\}\leqslant X_{2:m}\leqslant \cdots \leqslant X_{j:m} \leqslant \cdots \leqslant X_{k:m} \leqslant \cdots \leqslant X_{m-1:m} \leqslant X_{m:m}=\max\{X_1,\ldots,X_m\}$, for $1\leqslant j\leqslant k\leqslant m$, be the order statistics. For each integer $m\geqslant 2$, the {\bf extended Gini index} of $X$ is defined as follows:
\begin{align}\label{extended-Gini}
	IG_m(j,k)
	\equiv
	IG_m(j,k)(X)
	=
	\dfrac{\mathbb{E}[X_{k:m}-X_{j:m}]}{m\mu},
	\quad 
	1\leqslant j\leqslant k\leqslant m.
\end{align}

Summing over $k = 1, \ldots, m - 1$ after setting $j = k + 1$ in \eqref{extended-Gini}, the expression
\begin{align}\label{def-gini-index}
	\sum_{k=1}^{m-1}
	IG_m(k+1,k)
	=
	\dfrac{\mathbb{E}[X_{m:m}-X_{1:m}]}{m\mu}
	=
	IG_m(1,m)
	\equiv 
	IG_m
\end{align}
reduces to $m$th Gini index, $IG_m$, originally introduced by \cite{Gavilan-Ruiz2024}.

Consequently, by taking $m = 2=k$ and $j=1$ in \eqref{extended-Gini}, the extended Gini index  reduces to the standard Gini coefficient \citep{Gini1936}:
\begin{align*}
	G\equiv IG_2(1,2)= {\mathbb{E}\left[\vert X_1-X_2\vert\right]\over 2\mu}.
\end{align*}

\begin{proposition}\label{ext-gini-index-0}
For any $1\leqslant j\leqslant k\leqslant m$, the extended Gini index \eqref{extended-Gini} can be computed as
{\small
\begin{align*}
	IG_m(j,k)
	=
	\dfrac{		\displaystyle 
		\sum_{r=k}^{m}
		(-1)^{r-k}
		\binom{r-1}{k-1}
		\binom{m}{r}
		\int_0^\infty 
		\left[1-\{\mathbb{P}\left(X\leqslant t\right)\}^r\right]
		{\rm d}t
		-	
		\sum_{s=j}^{m}
		(-1)^{s-j}
		\binom{s-1}{j-1}
		\binom{m}{s}
		\int_0^\infty 
		\left[1-\{\mathbb{P}\left(X\leqslant t\right)\}^s\right]
		{\rm d}t}{
		\displaystyle
		m\int_0^\infty 
		\left[1-\mathbb{P}\left(X\leqslant t\right)\right]
		{\rm d}t}.
\end{align*}
}
\end{proposition}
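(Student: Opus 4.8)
The plan is to evaluate the numerator $\mathbb{E}[X_{k:m}-X_{j:m}]$ and the denominator $m\mu$ of \eqref{extended-Gini} separately, using the tail formula for the expectation of a non-negative random variable. Since $X\geqslant 0$, one has $\mu=\mathbb{E}(X)=\int_0^\infty[1-\mathbb{P}(X\leqslant t)]\,{\rm d}t$, so the denominator appearing in the statement is precisely $m\mu$. Everything then reduces to proving
\[
\mathbb{E}[X_{k:m}]=\sum_{r=k}^{m}(-1)^{r-k}\binom{r-1}{k-1}\binom{m}{r}\int_0^\infty\bigl[1-\{\mathbb{P}(X\leqslant t)\}^{r}\bigr]\,{\rm d}t,
\]
together with the same identity with $j$ (and summation index $s$) in place of $k$ (and $r$); subtracting the two identities and dividing by $m\mu$ then yields the claim.

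To prove the displayed formula, I would write $F(t)=\mathbb{P}(X\leqslant t)$ and begin from the classical representation $\mathbb{P}(X_{k:m}\leqslant t)=\sum_{i=k}^{m}\binom{m}{i}F(t)^{i}(1-F(t))^{m-i}$, which is just the probability that at least $k$ of $X_1,\dots,X_m$ do not exceed $t$. Expanding $(1-F)^{m-i}$ by the binomial theorem, reindexing the resulting double sum by $r=i+\ell$, and applying the subset identity $\binom{m}{i}\binom{m-i}{r-i}=\binom{m}{r}\binom{r}{i}$ together with the elementary alternating partial-sum identity $\sum_{i=k}^{r}(-1)^{r-i}\binom{r}{i}=(-1)^{r-k}\binom{r-1}{k-1}$, one obtains the single-power polynomial form
\[
\mathbb{P}(X_{k:m}\leqslant t)=\sum_{r=k}^{m}(-1)^{r-k}\binom{r-1}{k-1}\binom{m}{r}F(t)^{r}.
\]
Letting $t\to\infty$ (so that $F(t)\to 1$) shows that the coefficients sum to one, $\sum_{r=k}^{m}(-1)^{r-k}\binom{r-1}{k-1}\binom{m}{r}=1$, whence
\[
\mathbb{P}(X_{k:m}>t)=\sum_{r=k}^{m}(-1)^{r-k}\binom{r-1}{k-1}\binom{m}{r}\bigl[1-F(t)^{r}\bigr].
\]

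Finally, since $0\leqslant X_{k:m}\leqslant X_1+\cdots+X_m$, the variable $X_{k:m}$ is integrable and $\mathbb{E}[X_{k:m}]=\int_0^\infty\mathbb{P}(X_{k:m}>t)\,{\rm d}t$; moreover each integral $\int_0^\infty[1-F(t)^{r}]\,{\rm d}t$ is finite, because $1-F^{r}=(1-F)(1+F+\cdots+F^{r-1})\leqslant r(1-F)$ bounds it by $r\mu$. Hence the finite sum may be pulled out of the integral, which gives the formula for $\mathbb{E}[X_{k:m}]$; repeating the argument for $\mathbb{E}[X_{j:m}]$, subtracting, and dividing by $m\mu$ completes the proof. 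The only genuinely non-routine step is the combinatorial reduction of the order-statistic distribution function from its binomial-mixture form to the single-power polynomial form above; the remainder is bookkeeping with the tail formula and the integrability bound $\int_0^\infty[1-F^{r}]\,{\rm d}t\leqslant r\mu$.
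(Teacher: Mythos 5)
Your proposal is correct, and it reaches the stated formula by a genuinely different route from the paper. The paper invokes the almost-sure maximum--minimums identity of Salama and Koch, writing $X_{k:m}$ as the alternating sum $\sum_{r=k}^{m}(-1)^{r-k}\binom{r-1}{k-1}\sum_{t_1<\cdots<t_r}\max\{X_{t_1},\ldots,X_{t_r}\}$, then uses the iid assumption to collapse the inner sum to $\binom{m}{r}\,\mathbb{E}[X_{r:r}]$ and the representation $X_{r:r}=\int_0^\infty\bigl[1-\mathds{1}_{\bigcap_i\{X_i\leqslant t\}}\bigr]\,{\rm d}t$ together with Tonelli to obtain $\mathbb{E}[X_{r:r}]=\int_0^\infty[1-F(t)^r]\,{\rm d}t$. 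You instead work entirely at the level of distribution functions: starting from the classical binomial-mixture form of $\mathbb{P}(X_{k:m}\leqslant t)$, you derive the single-power representation $\sum_{r=k}^{m}(-1)^{r-k}\binom{r-1}{k-1}\binom{m}{r}F(t)^{r}$ via the subset identity and the partial alternating sum $\sum_{i=k}^{r}(-1)^{r-i}\binom{r}{i}=(-1)^{r-k}\binom{r-1}{k-1}$ (both of which check out), observe that the coefficients sum to one, and finish with the tail formula. The two arguments produce the same coefficients because your distributional identity is essentially what one gets by taking probabilities in the Salama--Koch identity; the difference is that your version is self-contained and elementary (no external citation needed, and it bypasses the a.s. statement about the random variables themselves), while the paper's is shorter given the cited result and makes the role of the maxima over subsets more transparent. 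Your explicit integrability bound $1-F^{r}\leqslant r(1-F)$, justifying the interchange of the finite sum with the integral, is a point the paper handles only implicitly via Tonelli.
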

\begin{proof}
Applying the results from \citet[][Proposition 10]{Salama-Koch2019}:
\begin{align*}
	X_{k:m}
	&=
	\sum_{r=k}^{m}
	(-1)^{r-k}
	\binom{r-1}{k-1}
	\sum_{1\leqslant t_1<\cdots<t_r\leqslant m}
	\max\{X_{t_1},\ldots,X_{t_r}\},
	\\[0,2cm]
		X_{j:m}
	&=
	\sum_{s=j}^{m}
	(-1)^{s-j}
	\binom{s-1}{j-1}
	\sum_{1\leqslant u_1<\cdots<u_s\leqslant m}
	\max\{X_{u_1},\ldots,X_{u_s}\},
\end{align*}
we have
\begin{align*}
	\mathbb{E}[X_{k:m}-X_{j:m}]
	&=
	\sum_{r=k}^{m}
(-1)^{r-k}
\binom{r-1}{k-1}
\sum_{1\leqslant t_1<\cdots<t_r\leqslant m}
\mathbb{E}[\max\{X_{t_1},\ldots,X_{t_r}\}]
\\[0,2cm]
&-	
	\sum_{s=j}^{m}
(-1)^{s-j}
\binom{s-1}{j-1}
\sum_{1\leqslant u_1<\cdots<u_s\leqslant m}
\mathbb{E}[\max\{X_{u_1},\ldots,X_{u_s}\}].
\end{align*}
Using the fact that $X_1, X_2, \ldots, X_m$ are iid, the above identity becomes
\begin{align*}
	\mathbb{E}[X_{k:m}-X_{j:m}]
	&=
	\sum_{r=k}^{m}
	(-1)^{r-k}
	\binom{r-1}{k-1}
	\binom{m}{r}
	\mathbb{E}[X_{r:r}]
	-	
	\sum_{s=j}^{m}
	(-1)^{s-j}
	\binom{s-1}{j-1}
	\binom{m}{s}
	\mathbb{E}[X_{s:s}].
\end{align*}
By applying the identity
\begin{align}\label{id-maximum}
	X_{r:r}
	=
	\max\{X_1,\ldots,X_r\}
	=
	\int_0^\infty 
	\left[1-\mathds{1}_{\bigcap_
		{i=1}^r\{X_i\leqslant t\}}\right]
	{\rm d}t,
\end{align}
we obtain
\begin{align}\label{exp-min}
	\mathbb{E}[X_{k:m}-X_{j:m}]
	&=
	\sum_{r=k}^{m}
	(-1)^{r-k}
	\binom{r-1}{k-1}
	\binom{m}{r}
	\int_0^\infty 
	\left[1-\mathbb{P}\left(\bigcap_
		{i=1}^r\{X_i\leqslant t\}\right)\right]
		{\rm d}t
		\nonumber
		\\[0,2cm]
	&-	
	\sum_{s=j}^{m}
	(-1)^{s-j}
	\binom{s-1}{j-1}
	\binom{m}{s}
	\int_0^\infty 
	\left[1-\mathbb{P}\left(\bigcap_
		{i=1}^s\{X_i\leqslant t\}\right)\right]
		{\rm d}t
		\nonumber
		\\[0,2cm]		
		&=
		\sum_{r=k}^{m}
		(-1)^{r-k}
		\binom{r-1}{k-1}
		\binom{m}{r}
		\int_0^\infty 
		\left[1-\{\mathbb{P}\left(X\leqslant t\right)\}^r\right]
		{\rm d}t
		\nonumber
		\\[0,2cm]
		&-	
		\sum_{s=j}^{m}
		(-1)^{s-j}
		\binom{s-1}{j-1}
		\binom{m}{s}
		\int_0^\infty 
		\left[1-\{\mathbb{P}\left(X\leqslant t\right)\}^s\right]
		{\rm d}t,
\end{align}
where Tonelli's Theorem validates the change in integration order, and the final step follows from the independence and identical distribution of $X_1, X_2, \ldots, X_m$.

Finally, combining the identities \eqref{exp-min}  and $\mu = \mathbb{E}(X) = \int_0^\infty [1 - \mathbb{P}(X \leqslant t)] {\rm d}t$ with the definition of the extended Gini index \eqref{extended-Gini}, the result follows.
\end{proof}

\begin{proposition}\label{ext-gini-index}
	For any $1\leqslant j\leqslant k\leqslant m$, the extended Gini index for $X\sim \text{Gamma}(\alpha,\lambda)$ (gamma distribution) is given by
	\begin{align}\label{mthgini_gamma}
		IG_m(j,k)
		&=
		{1\over\alpha m}
				\displaystyle 
			\sum_{r=k}^{m}
			(-1)^{r-k}
			\binom{r-1}{k-1}
			\binom{m}{r}
			\int_0^\infty 
			\left\{1-{\gamma^r(\alpha, t)\over\Gamma^r(\alpha) }\right\}
			{\rm d}t
				\nonumber	\\[0,2cm]
					&
			-	
					{1\over\alpha m}
			\sum_{s=j}^{m}
			(-1)^{s-j}
			\binom{s-1}{j-1}
			\binom{m}{s}
			\int_0^\infty 
			\left\{1-{\gamma^s(\alpha, t)\over\Gamma^s(\alpha) }\right\}
			{\rm d}t.
	\end{align}
\end{proposition}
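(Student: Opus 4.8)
The plan is to obtain \eqref{mthgini_gamma} as a direct specialization of Proposition \ref{ext-gini-index-0}, using the scale invariance of the extended Gini index to eliminate the rate (or scale) parameter. First I would observe that $IG_m(j,k)$ is invariant under the rescaling $X \mapsto cX$ for any $c>0$: order statistics transform as $(cX)_{k:m} = c\,X_{k:m}$ and the mean as $\mathbb{E}(cX) = c\mu$, so both the numerator $\mathbb{E}[X_{k:m}-X_{j:m}]$ and the denominator $m\mu$ in \eqref{extended-Gini} acquire the same factor $c$, which cancels. Hence there is no loss of generality in taking the second gamma parameter equal to $1$, so that $X\sim\text{Gamma}(\alpha,1)$ with cumulative distribution function $\mathbb{P}(X\leqslant t) = \gamma(\alpha,t)/\Gamma(\alpha)$, where $\gamma(\alpha,t)=\int_0^t u^{\alpha-1}e^{-u}\,{\rm d}u$ denotes the lower incomplete gamma function, and with mean $\mu=\mathbb{E}(X)=\alpha$.

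Next I would substitute these two facts into the closed-form expression furnished by Proposition \ref{ext-gini-index-0}. The denominator $m\int_0^\infty[1-\mathbb{P}(X\leqslant t)]\,{\rm d}t$ equals $m\mu = \alpha m$, while in the numerator each factor $[1-\{\mathbb{P}(X\leqslant t)\}^r]$ becomes $1-\gamma^r(\alpha,t)/\Gamma^r(\alpha)$ and likewise with $s$ in place of $r$. Pulling the constant $1/(\alpha m)$ out front then reproduces exactly the right-hand side of \eqref{mthgini_gamma}, which completes the argument.

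The only point needing a word of care is the finiteness of each integral $\int_0^\infty[1-\gamma^r(\alpha,t)/\Gamma^r(\alpha)]\,{\rm d}t = \mathbb{E}[X_{r:r}]$, which is required for the term-by-term evaluation to be meaningful; this is immediate since $X_{r:r}\leqslant X_1+\cdots+X_r$ and the gamma law has a finite first moment. There is no substantive obstacle in this proof: all the combinatorial and measure-theoretic work (the Salama--Koch expansion, the change of integration order via Tonelli, the iid reduction) has already been done in Proposition \ref{ext-gini-index-0}, and the gamma-specific content reduces to the single evaluation $\mathbb{P}(X\leqslant t)=\gamma(\alpha,t)/\Gamma(\alpha)$ together with the cancellation of the scale parameter that makes $\lambda$ disappear from the final formula.
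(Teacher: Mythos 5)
Your proof is correct and follows essentially the same route as the paper, which simply states that the result follows directly from Proposition \ref{ext-gini-index-0} and omits the details. Your explicit appeal to scale invariance to dispose of the rate parameter $\lambda$ (equivalently, the change of variable $v=\lambda t$) is exactly the detail the paper leaves implicit, and it is handled correctly.
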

\begin{proof}
The result follows directly from the Proposition \ref{ext-gini-index-0} and is thus omitted.
\end{proof}

\begin{remark}\label{rem-i}
Substituting $j=1$ and $k=m$ into Proposition \ref{ext-gini-index} and applying Newton's binomial Theorem yields
	\begin{align*}
IG_m
\equiv 
{IG}_m(1,m)
=
			{1\over \alpha m}
	\left[
	\int_0^\infty 
	\left\{
	1
	-
	{\gamma^m(\alpha,v)\over\Gamma^m(\alpha)}
	\right\}
	{\rm d}v  \,
	\nonumber
	-
	\int_0^\infty 
	\left\{
	1
	-
	{\gamma(\alpha,v)\over\Gamma(\alpha)}
	\right\}^m
	{\rm d}v
	\right],
		\end{align*}
	which is a previously established result in \cite{Vila2025}.
\end{remark}

\begin{remark}
	It is well-known that \cite[][Item 13]{Vila2025}
	\begin{align*}
		{	\displaystyle
			\int_0^\infty 
			\{
			\Gamma^2(\alpha)-\gamma^2(\alpha,s)
			\}
			{\rm d}s
			-
			\int_0^\infty 
			\Gamma^2(\alpha,s) {\rm d}s
		}
		=
		{\Gamma(2\alpha+1)\over 2^{2\alpha}\alpha^2\Gamma^2(\alpha)}.
	\end{align*}
	By applying the Legendre duplication formula  $\Gamma(x)\Gamma\left(x+{1}/{2}\right) = 2^{1-2x}\sqrt{\pi}\Gamma(2x)$ in the above identity, and then, by using Proposition \ref{ext-gini-index}, yields the following  widely recognized expression for the extended Gini index \citep{McDonald1979}:
	\begin{align}\label{main-formula}
		G
		\equiv 
		IG_2(1,2)
		=
		IG_2
		=
		{\Gamma(\alpha+{1\over 2})\over\sqrt{\pi}\alpha\Gamma(\alpha)}.
	\end{align}
\end{remark}

\section{Unbiasedness of extended Gini index estimator}\label{sec:03}

The main goal of this section is to derive a compact, closed-form expression for the expected value of a novel {\bf extended Gini index estimator},  $\widehat{IG}_m(j,k)$, $1\leqslant j\leqslant k\leqslant m$, which is defined as follows:
\begin{align}\label{estimator}
	\widehat{IG}_m(j,k)
	=
	{(m-1)!\over (n-1)(n-2)\cdots(n-m+1)} \,
	\dfrac{\displaystyle
		\sum_{\substack{{\bf i}=
				(i_1,\ldots,i_m)\in\mathbb{N}^m: \\[0,1cm]
			1\leqslant i_1<\cdots< i_m\leqslant n}}
		\left[
		X_{k:{\bf i}}
		-
		X_{j:{\bf i}}
		\right]
	}{\displaystyle \sum_{i=1}^{n}X_i},
\end{align}
where 
$
X_{1:{\bf i}}
=
\min\{X_{i_1},\ldots,X_{i_m}\}\leqslant X_{2:{\bf i}}\leqslant \cdots \leqslant X_{j:{\bf i}} \leqslant\cdots \leqslant X_{k:{\bf i}} \leqslant \cdots \leqslant X_{m-1:{\bf i}} \leqslant X_{m:{\bf i}}=\max\{X_{i_1},\ldots,X_{i_m}\}
$
are  the order statistics taken from
iid observations $X_{i_1}, X_{i_2},\ldots, X_{i_m}$ (of $X$). 

\begin{remark}\label{rem-gini-index}
	Setting $j=1$ and $k=m$ into \eqref{estimator} we obtain the $m$th Gini index estimator proposed in \cite{Vila2025}:
	\begin{align*}
			\widehat{IG}_m
			\equiv
	\widehat{IG}_m(1,m)
	=
		{(m-1)!\over (n-1)(n-2)\cdots(n-m+1)} \,
	\dfrac{\displaystyle
		\sum_{1\leqslant i_1<\cdots< i_m\leqslant n}
		\left[
		\max\{X_{i_1},\ldots,X_{i_m}\}
		-
		\min\{X_{i_1},\ldots,X_{i_m}\}
		\right]
	}{\displaystyle \sum_{i=1}^{n}X_i}.
	\end{align*}
\end{remark}

\begin{theorem}\label{main-theorem}
Let $X_1, X_2, \ldots, X_m$ be independent copies of a non-negative and absolutely continuous random variable $X$ with finite and positive expected value and common cumulative distribution function $F$. For any $1\leqslant j\leqslant k\leqslant m$, the following holds:
\begin{align*}
\mathbb{E}[	\widehat{IG}_m(j,k)]
	&=
	{n\over m}
	\sum_{r=k}^{m}
	(-1)^{r-k}
	\binom{r-1}{k-1}
	\binom{m}{r}
	\int_0^\infty
		\int_0^\infty 
\left\{
\mathscr{L}_F^{r}(z)
-
\mathbb{E}^r\left[
\mathds{1}_{\{X\leqslant t\}}
\exp\left(-X z\right)
\right]
\right\}
{\rm d}t  \,
	\mathscr{L}_F^{n-r}(z)
	{\rm d}z
	\nonumber
	\\[0,2cm]
	&-
	{n\over m}
	\sum_{s=j}^{m}
	(-1)^{s-j}
	\binom{s-1}{j-1}
	\binom{m}{s}
	\int_0^\infty 
		\int_0^\infty 
\left\{
\mathscr{L}_F^{s}(z)
-
\mathbb{E}^s\left[
\mathds{1}_{\{X\leqslant t\}}
\exp\left(-X z\right)
\right]
\right\}
{\rm d}t \,
	\mathscr{L}_F^{n-s}(z)
	{\rm d}z,
\end{align*}
where $\mathscr{L}_F(z)=\int_0^\infty \exp(-zx){\rm d}F(x)$ is the Laplace transform associated with distribution $F$, under the assumption that the relevant expectations and improper integrals converge.
\end{theorem}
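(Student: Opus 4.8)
The plan is to compute $\mathbb{E}[\widehat{IG}_m(j,k)]$ directly from the definition \eqref{estimator}, reducing it to an expectation involving $X_{k:\mathbf{i}}-X_{j:\mathbf{i}}$ for a single index set and then applying a Laplace-transform representation of the reciprocal $1/\sum_{i=1}^n X_i$. First I would use the symmetry (exchangeability) of the iid sample: since each of the $\binom{n}{m}$ index sets $\mathbf{i}$ contributes identically in expectation, the sum over $\mathbf{i}$ collapses and the combinatorial prefactor $(m-1)!/[(n-1)\cdots(n-m+1)]$ times $\binom{n}{m}$ simplifies to $n/m$. This leaves
\[
\mathbb{E}[\widehat{IG}_m(j,k)]
=
\frac{n}{m}\,
\mathbb{E}\!\left[
\frac{X_{k:m}-X_{j:m}}{\sum_{i=1}^n X_i}
\right],
\]
where in the numerator $X_{k:m}, X_{j:m}$ are order statistics of $X_1,\dots,X_m$ and the denominator sums all $n$ observations (so $X_1,\dots,X_m$ appear in both).

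Next I would handle the ratio by writing $1/S = \int_0^\infty e^{-zS}\,\mathrm{d}z$ for $S=\sum_{i=1}^n X_i>0$, valid almost surely since $X$ is absolutely continuous and positive-mean. Substituting this and invoking Tonelli/Fubini (under the stated convergence assumption) moves the $z$-integral outside, so one must evaluate
\[
\mathbb{E}\!\left[(X_{k:m}-X_{j:m})\,e^{-z\sum_{i=1}^n X_i}\right]
=
\mathbb{E}\!\left[(X_{k:m}-X_{j:m})\,e^{-z\sum_{i=1}^m X_i}\right]
\cdot
\mathscr{L}_F^{\,n-m}(z),
\]
the factorization coming from independence of $X_{m+1},\dots,X_n$ from the first $m$ variables. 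Then, exactly as in the proof of Proposition \ref{ext-gini-index-0}, I would apply the Salama–Koch identity (\citet[Proposition 10]{Salama-Koch2019}) to express $X_{k:m}$ and $X_{j:m}$ as signed sums of maxima over sub-collections, use the iid property to reduce each term to $\mathbb{E}[X_{r:r}\,e^{-z\sum_{i=1}^m X_i}]$ with the appropriate binomial count $\binom{m}{r}$, and then use the integral representation \eqref{id-maximum}, $X_{r:r}=\int_0^\infty [1-\mathds{1}_{\cap_{i=1}^r\{X_i\le t\}}]\,\mathrm{d}t$. Splitting the remaining $e^{-z\sum_{i=1}^m X_i}=\prod_{i=1}^m e^{-zX_i}$ across the $r$ "constrained" coordinates and the $m-r$ "free" ones gives, for each fixed $t$,
\[
\mathbb{E}\!\left[\Big(1-\mathds{1}_{\cap_{i=1}^r\{X_i\le t\}}\Big)\prod_{i=1}^m e^{-zX_i}\right]
=
\left(\mathscr{L}_F^{\,r}(z)-\mathbb{E}^r\!\big[\mathds{1}_{\{X\le t\}}e^{-Xz}\big]\right)\mathscr{L}_F^{\,m-r}(z),
\]
and combining the $\mathscr{L}_F^{\,m-r}(z)$ factor with $\mathscr{L}_F^{\,n-m}(z)$ yields $\mathscr{L}_F^{\,n-r}(z)$, matching the claimed formula; the $j$-sum is treated identically with $s$ in place of $r$.

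The main obstacle is the rigorous justification of interchanging the order of the three "integrations" — the $z$-integral from the representation of $1/S$, the $t$-integral from \eqref{id-maximum}, and the underlying expectation — since the integrand $(X_{k:m}-X_{j:m})e^{-zS}$ is not obviously absolutely integrable in $(z,\text{sample space})$ near $z=0$ without the moment/convergence hypothesis; I would control this by noting $0\le X_{k:m}-X_{j:m}\le X_{m:m}\le \sum_{i=1}^m X_i\le S$, so that $(X_{k:m}-X_{j:m})e^{-zS}\le S e^{-zS}$, whose $z$-integral is $1/S$ again, reducing everything to the finiteness of $\mathbb{E}[(X_{k:m}-X_{j:m})/S]\le 1$ and the stated hypothesis that the resulting improper integrals converge, after which Tonelli applies to the nonnegative pieces and Fubini to their signed combination. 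A secondary point to check is that the Salama–Koch decomposition may be applied term-by-term inside the expectation against the bounded factor $e^{-zS}$, which is immediate since it is a finite signed sum.
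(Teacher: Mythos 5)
Your proposal is correct and follows essentially the same route as the paper's proof: the Laplace representation $1/S=\int_0^\infty e^{-zS}\,{\rm d}z$, the Salama--Koch decomposition of $X_{k:\mathbf i}$ and $X_{j:\mathbf i}$ into signed sums of maxima, the integral representation \eqref{id-maximum} of the maximum, and independence to factor out the powers of $\mathscr{L}_F(z)$; collapsing the sum over index sets by exchangeability before (rather than after) these steps is only a cosmetic reordering. The one trivial slip is the remark that the $z$-integral of $Se^{-zS}$ is ``$1/S$ again'' (it is $1$), but your actual domination bound $\mathbb{E}[(X_{k:m}-X_{j:m})/S]\leqslant 1$ is correct and in fact slightly more careful than the paper's appeal to Tonelli.
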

\begin{proof}
By using the well-known identity
\begin{align*}
	\int_{0}^\infty \exp(-w z){\rm d}z={1\over w},
	\quad w>0,
\end{align*}
with $w=\sum_{i=1}^{n}X_i$, we get
\begin{multline}\label{exp-1}
	\mathbb{E}\left[
	\dfrac{\displaystyle
		\sum_{\substack{{\bf i}=
		(i_1,\ldots,i_m)\in\mathbb{N}^m: \\[0,1cm]
		1\leqslant i_1<\cdots< i_m\leqslant n}}
\left(
X_{k:{\bf i}}
-
X_{j:{\bf i}}
\right)
	}{\displaystyle \sum_{i=1}^{n}X_i}
	\right]
	=
	\sum_{\substack{{\bf i}=
		(i_1,\ldots,i_m)\in\mathbb{N}^m: \\[0,1cm]
		1\leqslant i_1<\cdots< i_m\leqslant n}}
	\mathbb{E}\left[X_{k:{\bf i}}\int_0^\infty\exp\left\{-\left(\sum_{i=1}^{n}X_i\right)z\right\}{\rm d}z\right]
	\\[0,2cm]
	-
	\sum_{\substack{{\bf i}=
		(i_1,\ldots,i_m)\in\mathbb{N}^m: \\[0,1cm]
		1\leqslant i_1<\cdots< i_m\leqslant n}}
	\mathbb{E}\left[X_{j:{\bf i}}\int_0^\infty\exp\left\{-\left(\sum_{i=1}^{n}X_i\right)z\right\}{\rm d}z\right]
		\\[0,2cm]
		=
	\sum_{\substack{{\bf i}=
		(i_1,\ldots,i_m)\in\mathbb{N}^m: \\[0,1cm]
		1\leqslant i_1<\cdots< i_m\leqslant n}} \
\int_0^\infty
\mathbb{E}\left[X_{k:{\bf i}}
\exp\left\{-\left(\sum_{i=1}^{n}X_i\right)z\right\}
\right]
{\rm d}z
\\[0,2cm]
-
\sum_{\substack{{\bf i}=
		(i_1,\ldots,i_m)\in\mathbb{N}^m: \\[0,1cm]
		1\leqslant i_1<\cdots< i_m\leqslant n}} \
\int_0^\infty
\mathbb{E}\left[X_{j:{\bf i}}
\exp\left\{-\left(\sum_{i=1}^{n}X_i\right)z\right\}
\right]
{\rm d}z,
\end{multline}
where Tonelli's Theorem justifies the interchange of integrals.
Utilizing the established identities from \citet[][Proposition 10]{Salama-Koch2019}:
\begin{align*}
	X_{k:{\bf i}}
	&=
		\sum_{r=k}^{m}
	(-1)^{r-k}
	\binom{r-1}{k-1}
	\sum_{\substack{{\bf t}=(t_1,\ldots,t_r)\in\mathbb{N}^r: \\[0,1cm]
			1\leqslant t_1<\cdots<t_r\leqslant m}}
	\max\{X_{t_1},\ldots,X_{t_r}\}
	\\[0,2cm]
	&=
	\sum_{r=k}^{m}
	(-1)^{r-k}
	\binom{r-1}{k-1}
	\sum_{\substack{{\bf t}=(t_1,\ldots,t_r)\in\mathbb{N}^r: \\[0,1cm]
		1\leqslant t_1<\cdots<t_r\leqslant m}}
	X_{r:{\bf t}}
\end{align*}
and
\begin{align*}
	X_{j:{\bf i}}
&=
\sum_{s=j}^{m}
(-1)^{s-j}
\binom{s-1}{j-1}
\sum_{\substack{{\bf u}=(u_1,\ldots,u_s)\in\mathbb{N}^s: \\[0,1cm]
		1\leqslant u_1<\cdots<u_s\leqslant m}}
		\max\{X_{u_1},\ldots,X_{u_s}\}
	\\[0,2cm]
&=
\sum_{s=j}^{m}
(-1)^{s-j}
\binom{s-1}{j-1}
\sum_{\substack{{\bf u}=(u_1,\ldots,u_s)\in\mathbb{N}^s: \\[0,1cm]
		1\leqslant u_1<\cdots<u_s\leqslant m}}
X_{s:{\bf u}},
\end{align*}
the expression \eqref{exp-1} becomes
\begin{multline}\label{exp-2}
		=
\sum_{\substack{{\bf i}=
		(i_1,\ldots,i_m)\in\mathbb{N}^m: \\[0,1cm]
		1\leqslant i_1<\cdots< i_m\leqslant n}}
	\sum_{r=k}^{m}
(-1)^{r-k}
\binom{r-1}{k-1}
\sum_{\substack{{\bf t}=(t_1,\ldots,t_r)\in\mathbb{N}^r: \\[0,1cm]
		1\leqslant t_1<\cdots<t_r\leqslant m}} \
\int_0^\infty
\mathbb{E}\left[
X_{m:{\bf t}}
\exp\left\{-\left(\sum_{i=1}^{r}X_i\right)z\right\}
\exp\left\{-\left(\sum_{i=r+1}^{n}X_i\right)z\right\}
\right]
{\rm d}z
\\[0,2cm]
-
\sum_{\substack{{\bf i}=
		(i_1,\ldots,i_m)\in\mathbb{N}^m: \\[0,1cm]
		1\leqslant i_1<\cdots< i_m\leqslant n}}
\sum_{s=j}^{m}
(-1)^{s-j}
\binom{s-1}{j-1}
\sum_{\substack{{\bf u}=(u_1,\ldots,u_s)\in\mathbb{N}^s: \\[0,1cm]
		1\leqslant u_1<\cdots<u_s\leqslant m}} \
\int_0^\infty
\mathbb{E}\left[
X_{m:{\bf u}}
\exp\left\{-\left(\sum_{i=1}^{s}X_i\right)z\right\}
\exp\left\{-\left(\sum_{i=s+1}^{n}X_i\right)z\right\}
\right]
{\rm d}z.
\end{multline}
Given the identical distribution of $X_1, X_2, \ldots, X_m$, we have
\begin{align*}
X_{m:{\bf t}}
\exp\left\{-\left(\sum_{i=1}^{r}X_i\right)z\right\}
&=
\max\{X_{t_1},\ldots,X_{t_r}\}
\exp\left\{-\left(\sum_{i=1}^{r}X_i\right)z\right\}
	\\[0,2cm]
&
\stackrel{d}{=}
\max\{X_{1},\ldots,X_{r}\}
\exp\left\{-\left(\sum_{i=1}^{r}X_i\right)z\right\}
=
X_{r:r}
\exp\left\{-\left(\sum_{i=1}^{r}X_i\right)z\right\}
\end{align*}
and
\begin{align*}
X_{m:{\bf u}}
\exp\left\{-\left(\sum_{i=1}^{s}X_i\right)z\right\}
&=
\max\{X_{u_1},\ldots,X_{u_s}\}
\exp\left\{-\left(\sum_{i=1}^{s}X_i\right)z\right\}
	\\[0,2cm]
&
\stackrel{d}{=}
\max\{X_{1},\ldots,X_{s}\}
\exp\left\{-\left(\sum_{i=1}^{s}X_i\right)z\right\}
=
X_{s:s}
\exp\left\{-\left(\sum_{i=1}^{s}X_i\right)z\right\},
\end{align*}
where $\stackrel{d}{=}$ denotes equality in distribution of random variables. Applying the aforementioned identities to \eqref{exp-2} and leveraging independence, the expression simplifies to
\begin{align*}
	&=	
	\binom{n}{m}
	\sum_{r=k}^{m}
	(-1)^{r-k}
	\binom{r-1}{k-1}
	\binom{m}{r}
	\int_0^\infty
	\mathbb{E}\left[
	X_{r:r}
	\exp\left\{-\left(\sum_{i=1}^{r}X_i\right)z\right\}
	\mathbb{E}\left[
	\exp\left\{-\left(\sum_{i=r+1}^{n}X_i\right)z\right\}
	\right]
	\right]
	{\rm d}z
	\\[0,2cm]
	&-
	\binom{n}{m}
	\sum_{s=j}^{m}
	(-1)^{s-j}
	\binom{s-1}{j-1}
	\binom{m}{s}
	\int_0^\infty
	\mathbb{E}\left[
	X_{s:s}
	\exp\left\{-\left(\sum_{i=1}^{s}X_i\right)z\right\}
	\mathbb{E}
	\left[
	\exp\left\{-\left(\sum_{i=s+1}^{n}X_i\right)z\right\}
	\right]
	\right]
	{\rm d}z
		\\[0,2cm]
	&=
\binom{n}{m}
\sum_{r=k}^{m}
(-1)^{r-k}
\binom{r-1}{k-1}
\binom{m}{r}
\int_0^\infty
\mathbb{E}\left[
X_{r:r}
\exp\left\{-\left(\sum_{i=1}^{r}X_i\right)z\right\}
\right]
\mathscr{L}_F^{n-r}(z)
{\rm d}z
\\[0,2cm]
&-
\binom{n}{m}
\sum_{s=j}^{m}
(-1)^{s-j}
\binom{s-1}{j-1}
\binom{m}{s}
\int_0^\infty
\mathbb{E}\left[
X_{s:s}
\exp\left\{-\left(\sum_{i=1}^{s}X_i\right)z\right\}
\right]
\mathscr{L}_F^{n-s}(z)
{\rm d}z.
\end{align*}

Thus, we have demonstrated that
\begin{align}\label{exp-3}
	\mathbb{E}\left[
	\dfrac{\displaystyle
		\sum_{\substack{{\bf i}=
				(i_1,\ldots,i_m)\in\mathbb{N}^m: \\[0,1cm]
				1\leqslant i_1<\cdots< i_m\leqslant n}}
		\left(
		X_{k:{\bf i}}
		-
		X_{j:{\bf i}}
		\right)
	}{\displaystyle \sum_{i=1}^{n}X_i}
	\right]
		&=
	\binom{n}{m}
	\sum_{r=k}^{m}
	(-1)^{r-k}
	\binom{r-1}{k-1}
	\binom{m}{r}
	\int_0^\infty
	\mathbb{E}\left[
	X_{r:r}
	\exp\left\{-\left(\sum_{i=1}^{r}X_i\right)z\right\}
	\right]
	\mathscr{L}_F^{n-r}(z)
	{\rm d}z
	\nonumber
	\\[0,2cm]
	&-
	\binom{n}{m}
	\sum_{s=j}^{m}
	(-1)^{s-j}
	\binom{s-1}{j-1}
	\binom{m}{s}
	\int_0^\infty
	\mathbb{E}\left[
	X_{s:s}
	\exp\left\{-\left(\sum_{i=1}^{s}X_i\right)z\right\}
	\right]
	\mathscr{L}_F^{n-s}(z)
	{\rm d}z.
\end{align}
	
On the other hand, by using the identity in \eqref{id-maximum},
we have
\begin{align}\label{exp-11}
	\mathbb{E}\left[
	X_{r:r}
	\exp\left\{-\left(\sum_{i=1}^{r}X_i\right)z\right\}
	\right]
	&=
	\int_0^\infty 
	\mathbb{E}\left[
	\left(1-\mathds{1}_{\bigcap_
		{i=1}^r\{X_i\leqslant t\}}\right)
	\exp\left\{-\left(\sum_{i=1}^{r}X_i\right)z\right\}
	\right]
	{\rm d}t
	\nonumber
	\\[0,2cm]
	&=
	\int_0^\infty 
	\left\{
	\mathscr{L}_F^{r}(z)
	-
	\mathbb{E}\left[
	\mathds{1}_{\bigcap_
		{i=1}^r\{X_i\leqslant t\}}
	\exp\left\{-\left(\sum_{i=1}^{r}X_i\right)z\right\}
	\right]
	\right\}
	{\rm d}t
		\nonumber
	\\[0,2cm]
	&=
		\int_0^\infty 
	\left\{
	\mathscr{L}_F^{r}(z)
	-
	\mathbb{E}^r\left[
	\mathds{1}_{\{X\leqslant t\}}
	\exp\left(-X z\right)
	\right]
	\right\}
	{\rm d}t,
\end{align}
where the last equality follows from the fact that the variables $X_1, X_2, \ldots, X_m$ are identically distributed as $X$.

Analogously, we get
\begin{align}\label{exp-12}
	\mathbb{E}\left[
X_{s:s}
\exp\left\{-\left(\sum_{i=1}^{s}X_i\right)z\right\}
\right]
	=
		\int_0^\infty 
\left\{
\mathscr{L}_F^{s}(z)
-
\mathbb{E}^s\left[
\mathds{1}_{\{X\leqslant t\}}
\exp\left(-X z\right)
\right]
\right\}
{\rm d}t.
\end{align}

Plugging \eqref{exp-11} and \eqref{exp-12} into \eqref{exp-3},
the proof of the theorem follows.
\end{proof}

We now apply Theorem \ref{main-theorem} to obtain an explicit formula for the expectation of the estimator $\widehat{IG}_m(j,k)$ in gamma populations, thereby establishing its unbiasedness.

\begin{corollary}\label{corollary-main}
	Let $X_1, X_2,\ldots, X_m$ be independent copies of $X\sim\text{Gamma}(\alpha,\lambda)$. For any $1\leqslant j\leqslant k\leqslant m$, we have:
	\begin{align*}
\mathbb{E}[\widehat{IG}_m(j,k)]
&=
{1\over \alpha m}
\sum_{r=k}^{m}
(-1)^{r-k}
\binom{r-1}{k-1}
\binom{m}{r}
\int_0^\infty 
\left\{
1
-
{\gamma^r(\alpha,v)\over\Gamma^r(\alpha)}
\right\}
{\rm d}v  \,
\nonumber
\\[0,2cm]
&-
{1\over \alpha m}
\sum_{s=j}^{m}
(-1)^{s-j}
\binom{s-1}{j-1}
\binom{m}{s}
\int_0^\infty 
\left\{
1
-
{\gamma^s(\alpha,v)\over\Gamma^s(\alpha)}
\right\}
{\rm d}v
=
{IG}_m(j,k),
\end{align*}
where ${IG}_m(j,k)$ is the extended Gini index given in Proposition \ref{ext-gini-index}. Thus, the estimator $	\widehat{IG}_m(j,k)$ is unbiased
for gamma populations.	
\end{corollary}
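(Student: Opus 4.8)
The plan is to feed the gamma-specific Laplace data into the general identity of Theorem~\ref{main-theorem} and watch the double integrals collapse onto the single integrals appearing in Proposition~\ref{ext-gini-index}. First I would record the two ingredients needed: the Laplace transform $\mathscr{L}_F(z)=\bigl(\lambda/(\lambda+z)\bigr)^{\alpha}$ of the $\text{Gamma}(\alpha,\lambda)$ law, and the truncated transform
\[
\mathbb{E}\bigl[\mathds{1}_{\{X\leqslant t\}}\exp(-Xz)\bigr]
=\frac{\lambda^{\alpha}}{\Gamma(\alpha)}\int_0^t x^{\alpha-1}e^{-(\lambda+z)x}\,{\rm d}x
=\Bigl(\frac{\lambda}{\lambda+z}\Bigr)^{\alpha}\frac{\gamma\bigl(\alpha,(\lambda+z)t\bigr)}{\Gamma(\alpha)},
\]
where the last equality is the substitution $u=(\lambda+z)x$ in the defining integral of the lower incomplete gamma function. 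Hence, for every integer $p\geqslant1$,
\[
\mathscr{L}_F^{p}(z)-\mathbb{E}^{p}\bigl[\mathds{1}_{\{X\leqslant t\}}\exp(-Xz)\bigr]
=\mathscr{L}_F^{p}(z)\Bigl[1-\frac{\gamma^{p}\bigl(\alpha,(\lambda+z)t\bigr)}{\Gamma^{p}(\alpha)}\Bigr].
\]

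Next I would evaluate the inner $t$-integral of Theorem~\ref{main-theorem} via the change of variables $v=(\lambda+z)t$, which gives
\[
\int_0^\infty\Bigl\{\mathscr{L}_F^{p}(z)-\mathbb{E}^{p}\bigl[\mathds{1}_{\{X\leqslant t\}}\exp(-Xz)\bigr]\Bigr\}{\rm d}t
=\frac{\mathscr{L}_F^{p}(z)}{\lambda+z}\int_0^\infty\Bigl[1-\frac{\gamma^{p}(\alpha,v)}{\Gamma^{p}(\alpha)}\Bigr]{\rm d}v.
\]
The key observation is that the surviving $v$-integral no longer depends on $z$, so it factors out of the outer $z$-integral, leaving the elementary computation
\[
\int_0^\infty\frac{\mathscr{L}_F^{n}(z)}{\lambda+z}\,{\rm d}z
=\lambda^{n\alpha}\int_0^\infty\frac{{\rm d}z}{(\lambda+z)^{n\alpha+1}}
=\lambda^{n\alpha}\int_\lambda^\infty\frac{{\rm d}w}{w^{n\alpha+1}}
=\frac{1}{n\alpha}
\]
(the scale parameter $\lambda$ cancels, as it must for a scale-invariant index). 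Substituting this back with $p=r$ in the first sum of Theorem~\ref{main-theorem} and $p=s$ in the second, the prefactor $n/m$ multiplies $1/(n\alpha)$ to produce $1/(\alpha m)$, and the resulting expression is term-by-term identical to the closed form for $IG_m(j,k)$ in Proposition~\ref{ext-gini-index}. This establishes simultaneously the displayed formula for $\mathbb{E}[\widehat{IG}_m(j,k)]$ and the claimed unbiasedness.

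The part I expect to demand the most care is not any individual computation but the verification that the convergence hypotheses of Theorem~\ref{main-theorem} are genuinely met for the gamma family, so that Tonelli's theorem legitimizes each interchange of the $t$-, $z$-, and summation orders used above: one needs $\int_0^\infty\bigl[1-\gamma^{p}(\alpha,v)/\Gamma^{p}(\alpha)\bigr]\,{\rm d}v<\infty$ (this equals $\lambda\,\mathbb{E}[X_{p:p}]$ and is finite since $X_{p:p}\leqslant X_1+\cdots+X_p$ has finite mean) together with $n\alpha>0$, which is exactly what makes $\int_0^\infty\mathscr{L}_F^{n}(z)/(\lambda+z)\,{\rm d}z$ convergent. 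Once these integrability facts are in place, the chain of substitutions above is routine and the corollary follows directly.
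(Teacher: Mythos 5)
Your proof is correct and follows essentially the same route as the paper's: both compute the truncated Laplace transform $\mathbb{E}[\mathds{1}_{\{X\leqslant t\}}e^{-Xz}]=(\lambda/(\lambda+z))^{\alpha}\gamma(\alpha,(\lambda+z)t)/\Gamma(\alpha)$, substitute into Theorem~\ref{main-theorem}, apply the change of variables $v=(\lambda+z)t$ to decouple the two integrals, and evaluate $\int_0^\infty(\lambda+z)^{-\alpha n-1}{\rm d}z$ to turn the prefactor $n/m$ into $1/(\alpha m)$. Your added verification of the integrability hypotheses (via $\lambda\,\mathbb{E}[X_{p:p}]<\infty$) is a welcome detail the paper leaves implicit.
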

\begin{proof}
For $X\sim\text{Gamma}(\alpha,\lambda)$ direct computation gives
\begin{align*}
\mathbb{E}\left[
\mathds{1}_{\{X\leqslant t\}}
\exp\left(-X z\right)
\right]
=
{\lambda^\alpha\over (z+\lambda)^\alpha\Gamma(\alpha)} \, \gamma(\alpha,(z+\lambda)t).
\end{align*}
As $\mathscr{L}_F(z)=\lambda^\alpha/(z+\lambda)^\alpha$, Theorem \ref{main-theorem} yields
\begin{align*}
	\mathbb{E}[\widehat{IG}_m(j,k)]
	&=
	{n\lambda^{\alpha n}\over m}
	\sum_{r=k}^{m}
	(-1)^{r-k}
	\binom{r-1}{k-1}
	\binom{m}{r}
	\int_0^\infty
	\int_0^\infty 
	\left\{
	1
	-
{\gamma^r(\alpha,(z+\lambda)t)\over\Gamma^r(\alpha)}
	\right\}
	{\rm d}t  \,
	{1\over (z+\lambda)^{\alpha n}} \,
	{\rm d}z
	\nonumber
	\\[0,2cm]
	&-
	{n\lambda^{\alpha n}\over m}
	\sum_{s=j}^{m}
	(-1)^{s-j}
	\binom{s-1}{j-1}
	\binom{m}{s}
	\int_0^\infty 
	\int_0^\infty 
	\left\{
	1
-
{\gamma^s(\alpha,(z+\lambda)t)\over\Gamma^s(\alpha)}
	\right\}
	{\rm d}t \,
	{1\over (z+\lambda)^{\alpha n}} \,
	{\rm d}z.
\end{align*}
With the change of variable $v= (z + \lambda)t$, the above identity becomes
\begin{align*}
\mathbb{E}[\widehat{IG}_m(j,k)]
&=
{n\lambda^{\alpha n}\over m}
\sum_{r=k}^{m}
(-1)^{r-k}
\binom{r-1}{k-1}
\binom{m}{r}
\int_0^\infty
\int_0^\infty 
\left\{
1
-
{\gamma^r(\alpha,v)\over\Gamma^r(\alpha)}
\right\}
{\rm d}v  \,
{1\over (z+\lambda)^{\alpha n+1}} \,
{\rm d}z
\nonumber
\\[0,2cm]
&-
{n\lambda^{\alpha n}\over m}
\sum_{s=j}^{m}
(-1)^{s-j}
\binom{s-1}{j-1}
\binom{m}{s}
\int_0^\infty 
\int_0^\infty 
\left\{
1
-
{\gamma^s(\alpha,v)\over\Gamma^s(\alpha)}
\right\}
{\rm d}v \,
{1\over (z+\lambda)^{\alpha n+1}} \,
{\rm d}z
\\[0,2cm]
&=
{1\over \alpha m}
\sum_{r=k}^{m}
(-1)^{r-k}
\binom{r-1}{k-1}
\binom{m}{r}
\int_0^\infty 
\left\{
1
-
{\gamma^r(\alpha,v)\over\Gamma^r(\alpha)}
\right\}
{\rm d}v  \,
\nonumber
\\[0,2cm]
&-
{1\over \alpha m}
\sum_{s=j}^{m}
(-1)^{s-j}
\binom{s-1}{j-1}
\binom{m}{s}
\int_0^\infty 
\left\{
1
-
{\gamma^s(\alpha,v)\over\Gamma^s(\alpha)}
\right\}
{\rm d}v.
\end{align*}
Hence, from Proposition \ref{ext-gini-index} the proof is concluded.
\end{proof}

\begin{remark}
	Note that Corollary \ref{corollary-main}'s result generalizes prior findings by \cite{Deltas2003}, \cite{Baydil2025}, and \cite{Vila2025}.
\end{remark}

\begin{remark}
	Due to the scale invariance of the extended Gini index estimator $\widehat{IG}_m(j,k)$, its expectation $\mathbb{E}[\widehat{IG}_m(j,k)]$ is independent of the rate parameter $\lambda$, as asserted in Corollary \ref{corollary-main}.
\end{remark}

Setting $j=1$ and $k=m$ into Corollary \ref{corollary-main}, from Newton’s binomial Theorem it follows that
\begin{proposition}\label{prop-main}
	Let $X_1, X_2,\ldots, X_m$ be independent copies of $X\sim\text{Gamma}(\alpha,\lambda)$. We have:
		\begin{align*}
		\mathbb{E}[\widehat{IG}_m]
		\equiv
		\mathbb{E}[\widehat{IG}_m(1,m)]
		&=
		{1\over \alpha m}
		\left[
		\int_0^\infty 
		\left\{
		1
		-
		{\gamma^m(\alpha,v)\over\Gamma^m(\alpha)}
		\right\}
		{\rm d}v  \,
		\nonumber
		-
		\int_0^\infty 
		\left\{
		1
		-
		{\gamma(\alpha,v)\over\Gamma(\alpha)}
		\right\}^m
		{\rm d}v
		\right]
		\\[0,2cm]
		&=
		{IG}_m(1,m)
		\equiv 
		IG_m,
	\end{align*}
	where $IG_m$ is the $m$th Gini index given in Remark \ref{rem-i} and $\widehat{IG}_m$ is its  corresponding estimator (initially introduced by \cite{Vila2025}) given in  Remark \ref{rem-gini-index}.
\end{proposition}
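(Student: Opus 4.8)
The plan is to derive Proposition~\ref{prop-main} as a direct specialization of Corollary~\ref{corollary-main} by substituting $j=1$ and $k=m$ into the closed-form expression for $\mathbb{E}[\widehat{IG}_m(j,k)]$. With these choices, the first sum over $r$ collapses: the constraint $k \leqslant r \leqslant m$ forces $r=m$, the sign $(-1)^{r-k}$ becomes $(-1)^0=1$, and $\binom{r-1}{k-1}\binom{m}{r} = \binom{m-1}{m-1}\binom{m}{m} = 1$, so that term contributes precisely $\frac{1}{\alpha m}\int_0^\infty \left\{1 - \gamma^m(\alpha,v)/\Gamma^m(\alpha)\right\}{\rm d}v$.

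The second sum, over $s$ from $j=1$ to $m$, is where the main work lies, although it is still routine. Here the general term is $\frac{1}{\alpha m}(-1)^{s-1}\binom{s-1}{0}\binom{m}{s}\int_0^\infty\left\{1 - \gamma^s(\alpha,v)/\Gamma^s(\alpha)\right\}{\rm d}v$, and since $\binom{s-1}{0}=1$ this reduces to $\frac{1}{\alpha m}\sum_{s=1}^{m}(-1)^{s-1}\binom{m}{s}\int_0^\infty\left\{1 - \gamma^s(\alpha,v)/\Gamma^s(\alpha)\right\}{\rm d}v$. The plan is to split the integrand and use $\sum_{s=1}^{m}(-1)^{s-1}\binom{m}{s} = 1 - \sum_{s=0}^{m}(-1)^s\binom{m}{s} = 1 - 0 = 1$ on the constant-$1$ piece, which handles the ``$1$'' contribution, while for the $\gamma^s/\Gamma^s$ piece I would pull the sum inside the integral (justified by finiteness of the integrals, as assumed) and recognize $\sum_{s=1}^{m}(-1)^{s-1}\binom{m}{s}\left(\gamma(\alpha,v)/\Gamma(\alpha)\right)^s = 1 - \sum_{s=0}^{m}\binom{m}{s}\left(-\gamma(\alpha,v)/\Gamma(\alpha)\right)^s = 1 - \left(1 - \gamma(\alpha,v)/\Gamma(\alpha)\right)^m$ by Newton's binomial theorem. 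Assembling, the second sum equals $\frac{1}{\alpha m}\int_0^\infty\left[1 - \left(1 - \left(1 - \gamma(\alpha,v)/\Gamma(\alpha)\right)^m\right)\right]{\rm d}v = \frac{1}{\alpha m}\int_0^\infty\left(1 - \gamma(\alpha,v)/\Gamma(\alpha)\right)^m{\rm d}v$.

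Subtracting the second contribution from the first gives exactly
\[
\mathbb{E}[\widehat{IG}_m(1,m)]
=
\frac{1}{\alpha m}\left[\int_0^\infty\left\{1 - \frac{\gamma^m(\alpha,v)}{\Gamma^m(\alpha)}\right\}{\rm d}v - \int_0^\infty\left\{1 - \frac{\gamma(\alpha,v)}{\Gamma(\alpha)}\right\}^m{\rm d}v\right],
\]
which is the first displayed equality. The second equality, identifying this with $IG_m(1,m)\equiv IG_m$, then follows immediately from Corollary~\ref{corollary-main}'s conclusion $\mathbb{E}[\widehat{IG}_m(j,k)] = IG_m(j,k)$ evaluated at $(j,k)=(1,m)$, together with Remark~\ref{rem-i}, which records precisely this closed form for $IG_m$. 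I do not anticipate any genuine obstacle: the only points requiring a word of care are the interchange of the finite sum with the integral (legitimate since each integral converges under the standing assumption) and the bookkeeping of the binomial identities, so the proof can reasonably be stated as ``immediate from Corollary~\ref{corollary-main} and Remark~\ref{rem-i} upon applying Newton's binomial theorem,'' mirroring the style already used for Remark~\ref{rem-i} itself.
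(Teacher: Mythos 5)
Your proposal is correct and follows exactly the paper's route: the paper proves Proposition \ref{prop-main} simply by setting $j=1$, $k=m$ in Corollary \ref{corollary-main} and invoking Newton's binomial theorem, which is precisely the computation you carry out in detail. The only point worth a slight rewording is that the ``constant-$1$'' and ``$\gamma^s/\Gamma^s$'' pieces should be separated at the level of the integrand inside a single convergent integral (since $\int_0^\infty 1\,{\rm d}v$ alone diverges), but your manipulation is effectively doing this and the argument is sound.
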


\begin{remark}
	Note that the result of Proposition \ref{prop-main} was previously established in \cite{Vila2025}.
\end{remark}

\section{Illustrative simulation study}\label{sec:04}

In this section, we provide a Monte Carlo simulation study to evaluate the finite-sample performance of the extended Gini index estimator. Specifically, we want to confirm via simulations the unbiasedness of the extended Gini index estimator. We consider independent random samples of size $n \in \{5, 10, 20, 30\}$ drawn from a gamma distribution with shape parameter $\alpha = 2$ and rate parameter $\beta = 1$. For each sample, we compute the extended Gini index using $m = 5$, $j = 2$, and $k = 4$. The simulation is repeated $N_{\text{sim}} = 500$ times to estimate the bias and mean squared error (MSE) of the estimator, which are given by
\begin{eqnarray*}
 \widehat{\text{Bias}}(\widehat{IG}_m(j,k)) = \frac{1}{N_{\text{sim}}} \sum_{l=1}^{N_{\text{sim}}} {\widehat{IG}_m(j,k)^{(l)} - {IG}_m(j,k)},\\
\quad
\widehat{\text{MSE}}(\widehat{IG}_m(j,k)) = \frac{1}{N_{\text{sim}}} \sum_{l=1}^{N_{\text{sim}}} \big(\widehat{IG}_m(j,k)^{(l)} - {IG}_m(j,k)\big)^2,
\end{eqnarray*}
where $\widehat{IG}_m(j,k)^{(l)}$ denotes the $l$-th Monte Carlo replicate of the extended Gini index estimator defined in Equation~\eqref{estimator}, and ${IG}_m(j,k)$ represents the true extended Gini index defined in Equation~\eqref{mthgini_gamma}. Here, $N_{\text{sim}}$ is the number of Monte Carlo replications.

Table~\ref{tab:sim_results} summarizes the simulation results, including the sample size ($n$), bias, MSE, average estimates of the index, and its true value under the specified gamma distribution. From this table, we note that, as expected, the Monte Carlo results are consistent with the theoretical unbiasedness established in Section~\ref{sec:03}, i.e., the empirical bias is close to zero. We also note that the MSE is close to zero and decreases as the sample size grows.

\begin{table}[!ht]
\centering
\caption{Monte Carlo results for the extended Gini index estimator with $m = 4$, $j = 2$, $k = 3$, $\alpha = 2$, and $\beta = 1$, based on $500$ replications.}
\label{tab:sim_results}
\begin{tabular}{ccccc}
\toprule
$n$ & Bias & MSE & Average Estimate & True Value \\
\midrule
5  & $-3.87 \times 10^{-4}$ & $2.61 \times 10^{-3}$ & 0.09618 & 0.09657 \\
10 & $-4.22 \times 10^{-4}$ & $6.59 \times 10^{-4}$ & 0.09615 & 0.09657 \\
20 & $-5.44 \times 10^{-5}$ & $3.13 \times 10^{-4}$ & 0.09652 & 0.09657 \\
30 & $-1.12 \times 10^{-4}$ & $1.72 \times 10^{-4}$ & 0.09646 & 0.09657 \\
\bottomrule
\end{tabular}
\end{table}

\section{Application to real data}
\label{sec:05}

We illustrate the behavior of the extended  Gini index using real-world income data. Specifically, we consider data on gross domestic product (GDP) per capita (expressed in international-\$ at 2021 prices) for the year 2023; see \url{https://ourworldindata.org/grapher/gdp-per-capita-worldbank} and Table \ref{table_countries}. The analysis focuses on countries in South America, supplemented by three high-income countries (United States, Ireland, and Singapore) and three low-income countries (Somalia, Liberia, and Uganda), in order to create a broader context of income dispersion.

\begin{table}[!ht]\label{table_countries}
\centering
\caption{2023 GDP per capita for selected countries.}
\begin{tabular}{lr}
  \hline
Countries ($n=17$) & GDP (international-\$ in 2021 prices) \\
  \hline
Singapore & 127543.55 \\
  Ireland & 114922.39 \\
  United States & 74577.51 \\
  Guyana & 49315.16 \\
  Uruguay & 31019.31 \\
  Chile & 29462.64 \\
  Argentina & 27104.98 \\
  Suriname & 19043.71 \\
  Brazil & 19018.24 \\
  Colombia & 18692.38 \\
  Paraguay & 15783.11 \\
  Peru & 15294.26 \\
  Ecuador & 14472.32 \\
  Bolivia & 9843.97 \\
  Uganda & 2791.06 \\
  Liberia & 1616.92 \\
  Somalia & 1402.47 \\
   \hline
\end{tabular}
\end{table}

We fitted a gamma distribution to the income data in Table~\ref{table_countries} using maximum likelihood estimation, implemented via the \texttt{R} package \texttt{fitdistrplus} \citep{fitdistrplus:15}. Figure~\ref{fig:diagnostic_plot} displays the corresponding diagnostic plots, which suggest a good fit between the empirical data and the theoretical distribution. To further evaluate the adequacy of the gamma model, we performed two goodness-of-fit tests: the Kolmogorov–Smirnov (KS) and the Cramér–von Mises (CvM) tests. The resulting $p$-values are 0.7465 and 0.7348, respectively, indicating no evidence against the gamma distribution hypothesis. Therefore, these results support the adequacy of the gamma distribution assumption.

\begin{figure}[H]
\centering
\includegraphics[width=0.8\textwidth]{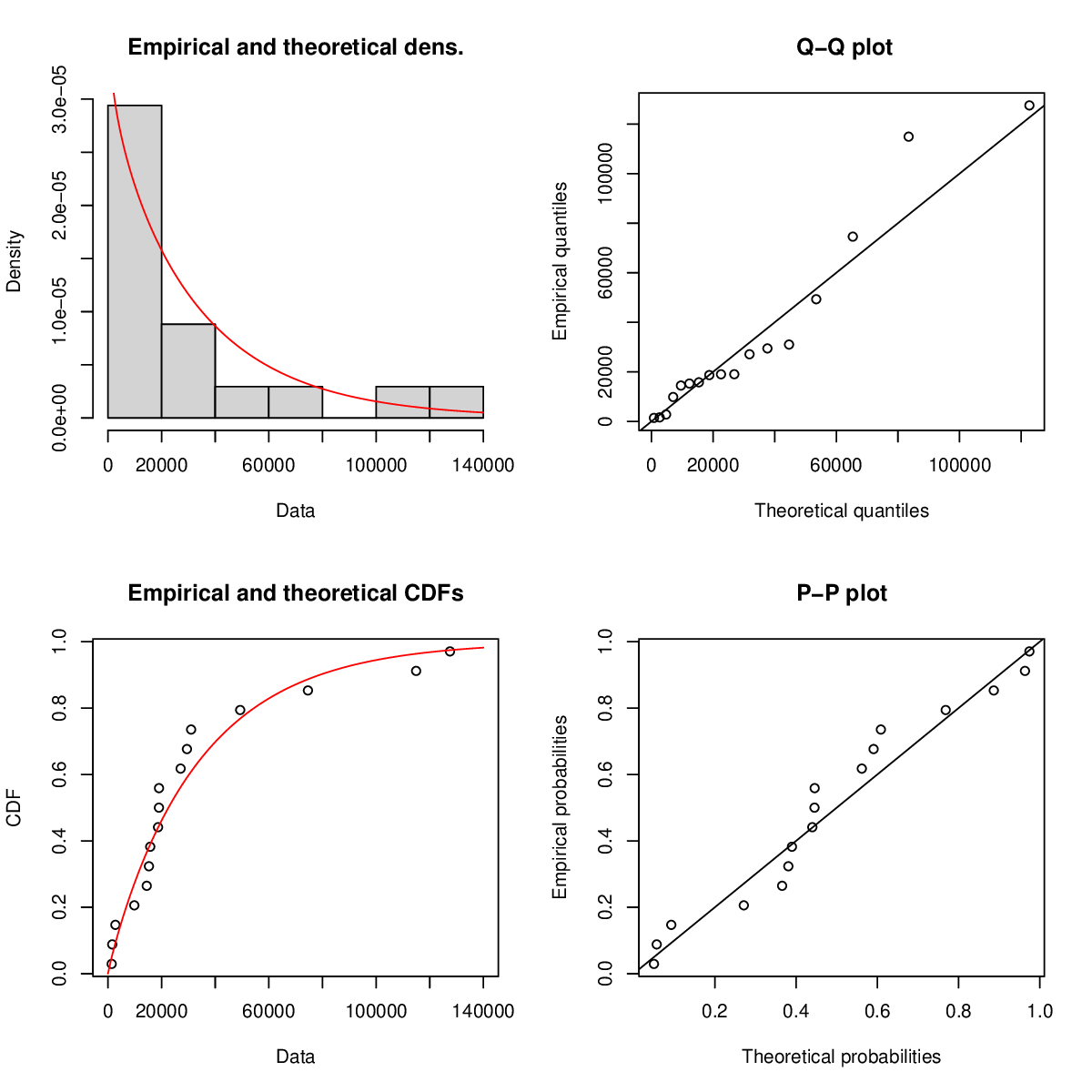}
\caption{Diagnostic plots for the gamma distribution fitted to 2023 GDP per capita data.}
\label{fig:diagnostic_plot}
\end{figure}

Figure~\ref{fig:heatmap_gmjk} shows a heatmap summarizing the estimated values of $G_m(j,k)$ (defined in Equation \eqref{estimator}) across multiple values of $m$, $j$ and $k$, satisfying $1 \leqslant j \leqslant k \leqslant m$. As mentioned, from Equation \eqref{estimator}, we obtain the, standard Gini index estimator when $m= 2=k$ and $j=1$ \citep{Gini1936}, and the  $m$th Gini index estimator when $j=1$ and $k=17$ \citep{Vila2025}. On the one hand, the left-most plot in Figure~\ref{fig:heatmap_gmjk} corresponds to the standard Gini index, estimated at 0.5600, which reflects a high level of income disparity across the sample. This is expected due to the inclusion of both high-income economies (United States, Ireland, and Singapore) and low-income countries (Somalia, Liberia, and Uganda). On the other hand, the value of the $m$th Gini index estimate is 0.2206, as shown in the bottom-right plot of the heatmap. These results emphasizes the effect of averaging differences over all possible subsamples of size $m$, thereby reducing the influence of extreme values. Our proposed index $G_m(j,k)$, as can be seen in the estimates plotted in Figure~\ref{fig:heatmap_gmjk}, allows a wide range of combinations of $m$, $j$ and $k$. Note that combinations that emphasize differences between the outermost order statistics tend to yield higher inequality estimates, while those focusing on central positions indicate lower inequality. This flexibility allows the analyst to control the influence of outliers and to extract more information on inequality concentration.

\begin{figure}[H]
\centering
\includegraphics[width=1.0\textwidth]{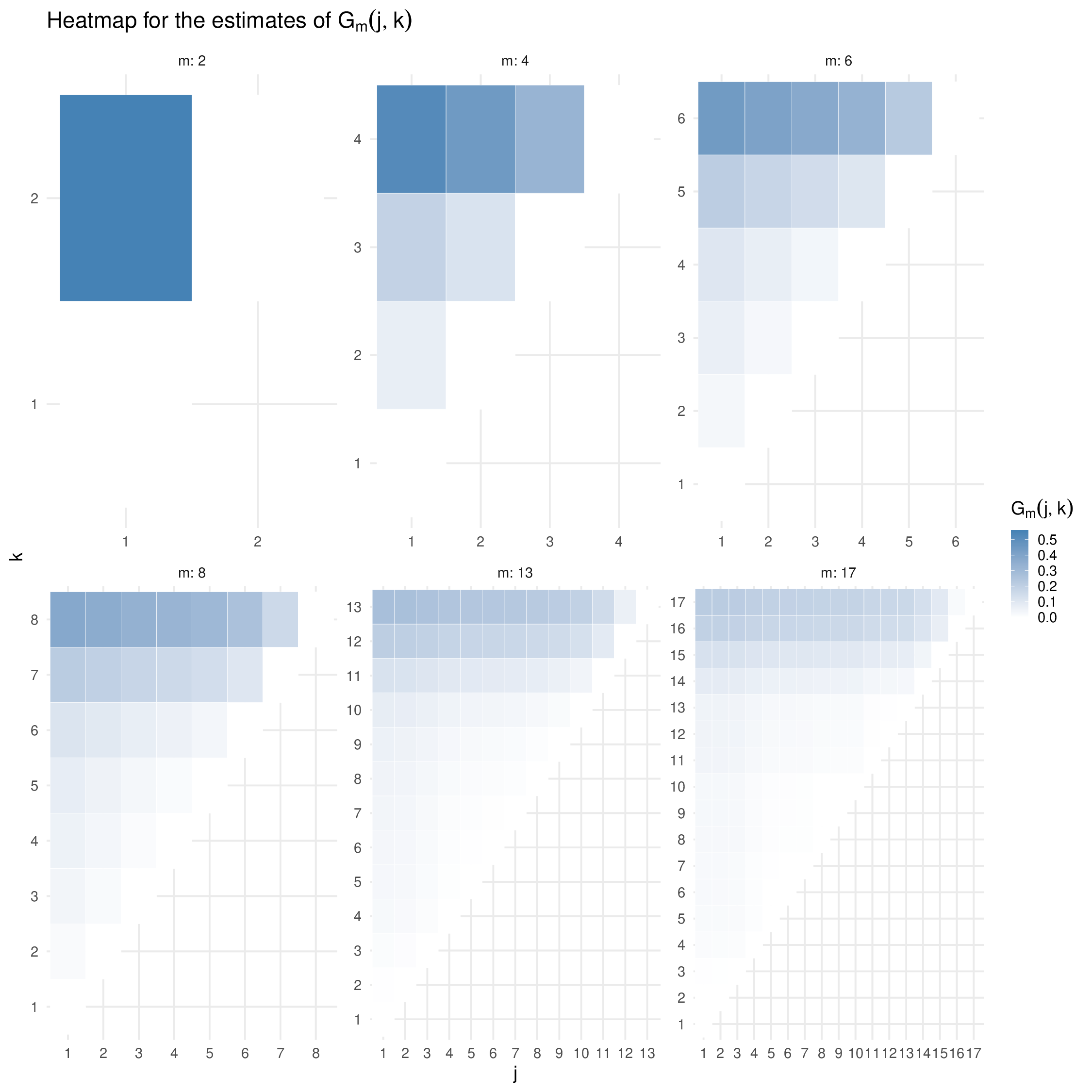}
\caption{Heatmap of the estimates of $G_m(j,k)$ for different values of $m$, based on 2023 GDP per capita data.}
\label{fig:heatmap_gmjk}
\end{figure}

\section{Concluding remarks}\label{sec:06}

We proposed a novel generalization of the $m$th Gini index \citep{Gavilan-Ruiz2024} based on ordered differences within subsets of size $m$, leading to the extended Gini index. This flexible formulation captures distributional features that the $m$th Gini index may overlook, particularly by allowing for the exclusion of extreme values through  choices of $j$ and $k$. We derived a closed-form unbiased estimator of this index for gamma-distributed populations and investigated its finite-sample properties via Monte Carlo simulations. The simulation confirmed the unbiasedness of the proposed estimator in finite samples. We applied the proposed estimator to real GDP per capita data for the year 2023, covering countries in South America, along with high- and low-income economies. The results showed that the extended Gini index provides a rich spectrum of inequality estimates. Particularly, combinations of $(j,k)$ that encompasses extremes higher inequality estimates, whereas those focusing on central order statistics return lower inequality values. This flexibility makes the proposed index a powerful tool for capturing distributional nuances frequently neglected by classical measures such as the classical Gini index.


\paragraph*{Acknowledgements}
The research was supported in part by CNPq and CAPES grants from the Brazilian government.

\paragraph*{Disclosure statement}
There are no conflicts of interest to disclose.



\end{document}